\newtheorem{theorem}{Theorem}[section]
\newtheorem{proposition}[theorem]{Proposition}
\begin{document}

\title{A Statistical Modelling and Analysis of PHEVs' Power Demand in Smart Grids}

\author{\IEEEauthorblockN{Farshad Rassaei, Wee-Seng Soh and Kee-Chaing Chua \\}
\IEEEauthorblockA{Department of Electrical and Computer Engineering\\
National University of Singapore, Singapore\\
Email: \{f.rassaei, weeseng, eleckc\}@nus.edu.sg}
}
\maketitle

\begin{abstract}

Electric vehicles (EVs) and particularly plug-in hybrid electric vehicles (PHEVs) are foreseen to become popular in the near future. Not only are they much more environmentally friendly than conventional internal combustion engine (ICE) vehicles, their fuel can also be catered from diverse energy sources and resources. However, they add significant load on the power grid as they become widespread. The characteristics of this extra load follow the patterns of people's driving behaviours. In particular, random parameters such as arrival time and driven distance of the vehicles determine their expected demand profile from the power grid. In this paper, we first present a model for uncoordinated charging power demand of PHEVs based on a stochastic process and accordingly we characterize the EV's expected daily power demand profile. Next, we adopt different distributions for the EV's charging time following some available empirical research data in the literature. Simulation results show that the EV's expected daily power demand profiles obtained under the uniform, Gaussian with positive support and Rician distributions for charging time are identical when the first and second order statistics of these distributions are the same. This gives us useful insights into the long-term planning for upgrading power systems' infrastructure to accommodate PHEVs. In addition, the results from this modelling can be incorporated into designing demand response (DR) algorithms and evaluating the available DR techniques more accurately.      

\end{abstract}

\IEEEpeerreviewmaketitle

\section{Introduction}
The current state-of-the-art in information technology (IT) and data processing are going to be employed extensively in smart grids \cite{ipakchi_grid_2009}. Widespread deployment of advanced metering infrastructure (AMI) enables real-time and two-way information exchange between demand side users and the electric utility. This evolution affects all different segments of the grid including generation side, transmission, distribution, as well as the demand side. \par 

Traditionally, the utility designs and installs the power grid's infrastructure such that it can provide power to users' adverse daily power demand profiles similar to that shown in Fig. \ref{f2}. This power demand profile has a significant peak-to-average ratio (PAR) that can potentially reduce the power grids' efficiency and incur exorbitant costs for developing the power grid's infrastructure, i.e., increasing the power generation, transmission, and distribution capacity of the power grid. This extra capacity is just to serve the power demand of users during peak-time periods. Therefore, this drawback has motivated intensive research on strategies that can utilize the existing power grid more efficiently so that more consumers can be accommodated and served without developing new costly infrastructure. The main objective of these strategies is to make the demand responsive \cite{mohsenian-rad_autonomous_2010}. Similar power efficiency concerns have become crucially important for supporting larger number of tenants in green cloud data centres \cite{Dalvandi}.

\par

Demand response (DR) is predicted to become even more important as the use of new electricity-hungry appliances such as plug-in electric vehicles (PEVs) or plug-in hybrid electric vehicles (PHEVs) is becoming more widespread. Typically, on charging mode, they can double the average dwelling's energy consumption, with current PHEVs consuming 0.25-0.35 kWh of energy for one mile of driving \cite{van_roy_apartment_2014}. \par

\begin{figure}
    \centering
    \includegraphics[width=\columnwidth]{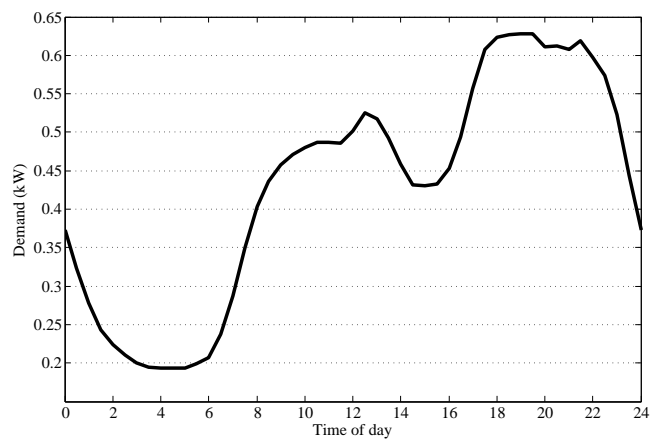}\vspace{-1em}
    \caption{Annual mean daily power demand profile for domestic electricity use in UK ~\cite{richardson2010domestic}.}\vspace{-0.85em}
    \label{f2}
\end{figure}

On the other hand, PEVs have several important advantages compared to internal combustion engine (ICE) vehicles. Not only do they have lower maintenance and operation costs, they also produce little or even no air pollution and greenhouse gases in locales where they are being used \cite{li_modeling_2012}. Above that, they offer valuable flexibility as their fuel can be catered from diverse sources and resources, e.g., nuclear energy and wind power \cite{shireen_plug-hybrid_2010}.

 However, in spite of their vast advantages, the market size of PEVs has been slower than expected as their adoption faces several barriers. One key reason is the extra cost of their batteries. In addition, the shortage of recharging infrastructure causes \textit{range anxiety} for pure electric vehicles' drivers. But, plug-in hybrids resolve the latter problem for pure electric vehicles, by having a combustion engine which works as a backup when the batteries are depleted, yielding to comparable driving ranges for PHEVs to conventional ICE cars \cite{callaway_achieving_2011}. 

\par 


Although it makes sense to envisage the number of electric cars increasing, it is hard to see that the electricity infrastructure capacity growing with the same rate concurrently. Thus, the ramification of introducing a large number of PHEVs into the grid has become an important avenue for research in recent years \cite{shao_challenges_2009}. First, we need to ask how uncoordinated charging, i.e., the battery of the vehicle either starts charging as soon as plugged in or after a user-defined delay, can affect the existing power grid. Next, we need to ask, considering this demand as a worst-case scenario, how we can satisfy it efficiently when we have information exchange capability and intelligence in a smart grid. \par  

There are several prior literature on modelling the impact of uncoordinated charging of PHEVs. However, most of them require much detailed information about passenger car travel behaviour, e.g., \cite{grahn_phev_2014} and \cite{lee_synthesis_2011}. Not only are the models mostly complicated and very test-oriented, but the sensitivity of the PHEVs' charging load to different parameters is not also clear. Moreover, most of them do not provide expected daily power demand due to EVs, particularly when EVs are charged in households rather than in charging stations. For instance, \cite{Spatial} provides a spatial and temporal model of electric vehicles charging demand for fast charging stations situated around highway exits based on known traffic data. In \cite{grahn_phev_2014}, a utilization model is proposed based on type-of-trip. The authors in \cite{li_modeling_2012} have used random simulation and statistical analysis to fit a distribution for the overall charging demand of PHEVs mainly for probabilistic power flow (PPF) calculations. In \cite{alizadeh_scalable_2014}, the daily load profile is modelled by using queuing theory and the approach is suitable mainly for accurate short-time load forecasting. \par

Furthermore, since PHEVs are considered as the main component of the residential flexible electricity demand, numerous researches have been carried out for PHEVs' DR, e.g., \cite{clement_coordinated_2009} and \cite{Clement}. Additionally, their storage capacity can be used for improving the power grid's reliability, e.g., in terms of frequency control \cite{moghadam_randomized_2013}. But, the main drawback in most of these demand response works is that they do not consider the inherent randomness of this demand in the first place. \par 

Therefore, in this paper, we present a stochastic model for uncoordinated charging power demand of a typical PHEV by formulating it as a stochastic process based on the arrival time and driven distance of the vehicles. Moreover, we derive PHEV's expected daily power demand according to this model for arbitrary random distributions of arrival time and charging time. This gives us useful insights into the long-term planning for upgrading the power systems' infrastructure to accommodate PHEVs. In addition, the results from this modelling can be incorporated into designing DR algorithms and evaluating the available DR techniques more accurately. \par

The rest of this paper is organized as follows. Section \ref{SM} provides the system model. Statistical analysis is addressed in section \ref{Sta}. Numerical results and simulations are represented in section \ref{SR}. Finally, section \ref{Con} concludes this paper.

\section{System Model} \label{SM}

In this section, we describe the energy system model and introduce the layout of this study. Fig. \ref{f1} represents a basic power system model where multiple energy customers share one energy source retailer or an aggregator \cite{mohsenian-rad_autonomous_2010} and \cite{mohsenian-rad_optimal_2010}. Consumers' total load consists of two different types of load; flexible load and inflexible load (see Fig. \ref{f3}). Loads which need \textit{on-demand} power supply (e.g., refrigerators) are considered as inflexible, whereas loads that can \textit{tolerate} some delays in power supply (e.g., PHEVs) are assumed as flexible loads \cite{mohsenian-rad_optimal_2010}.

\begin{figure}
	\centering
	\includegraphics[width=\columnwidth]{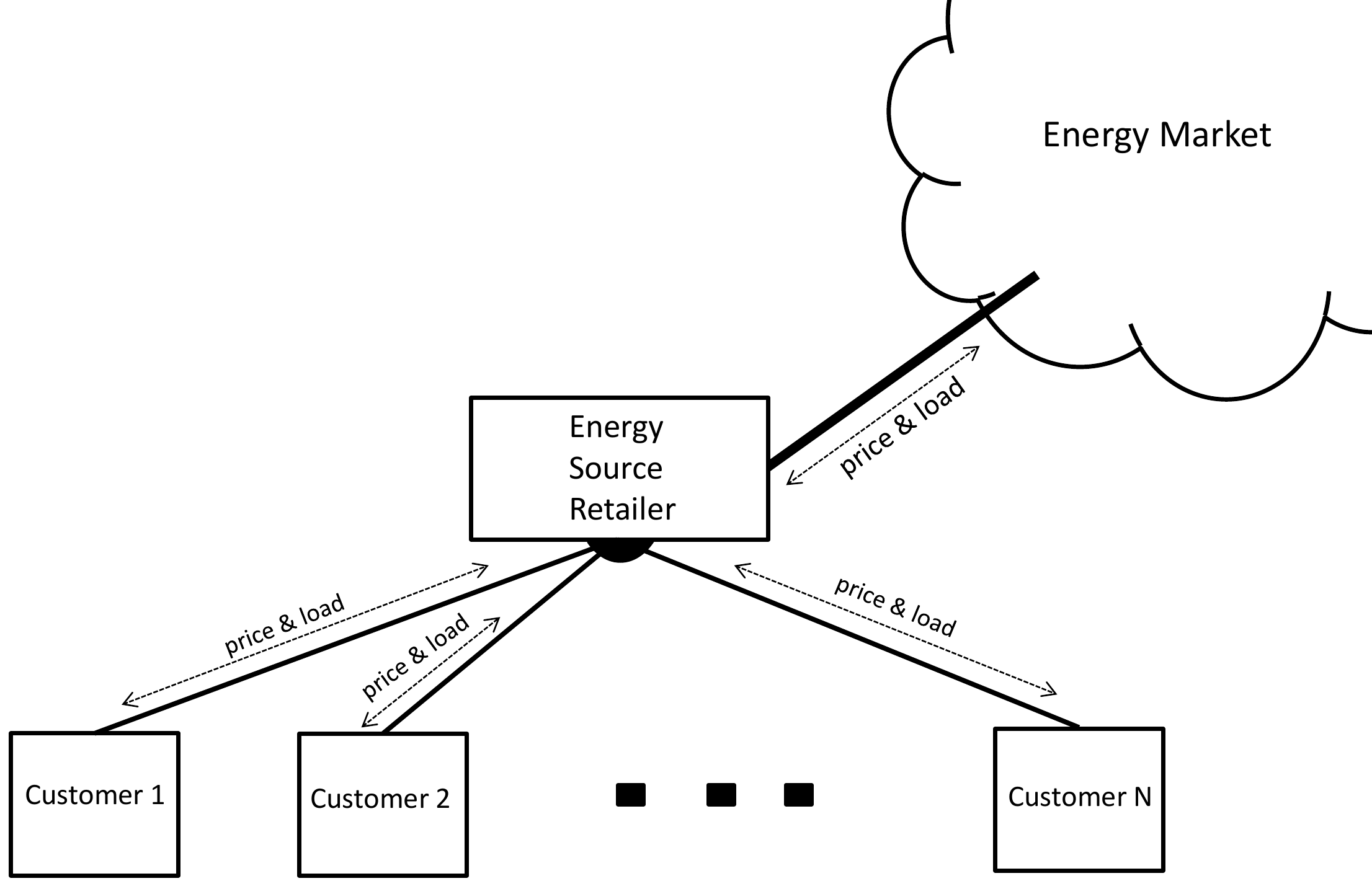} 
	\caption{Basic model of a smart energy system comprised of multiple load customers which share one energy source retailer or an aggregator.}
	\label{f1}
\end{figure}

 \begin{figure}
     \centering
     \includegraphics[width=\columnwidth]{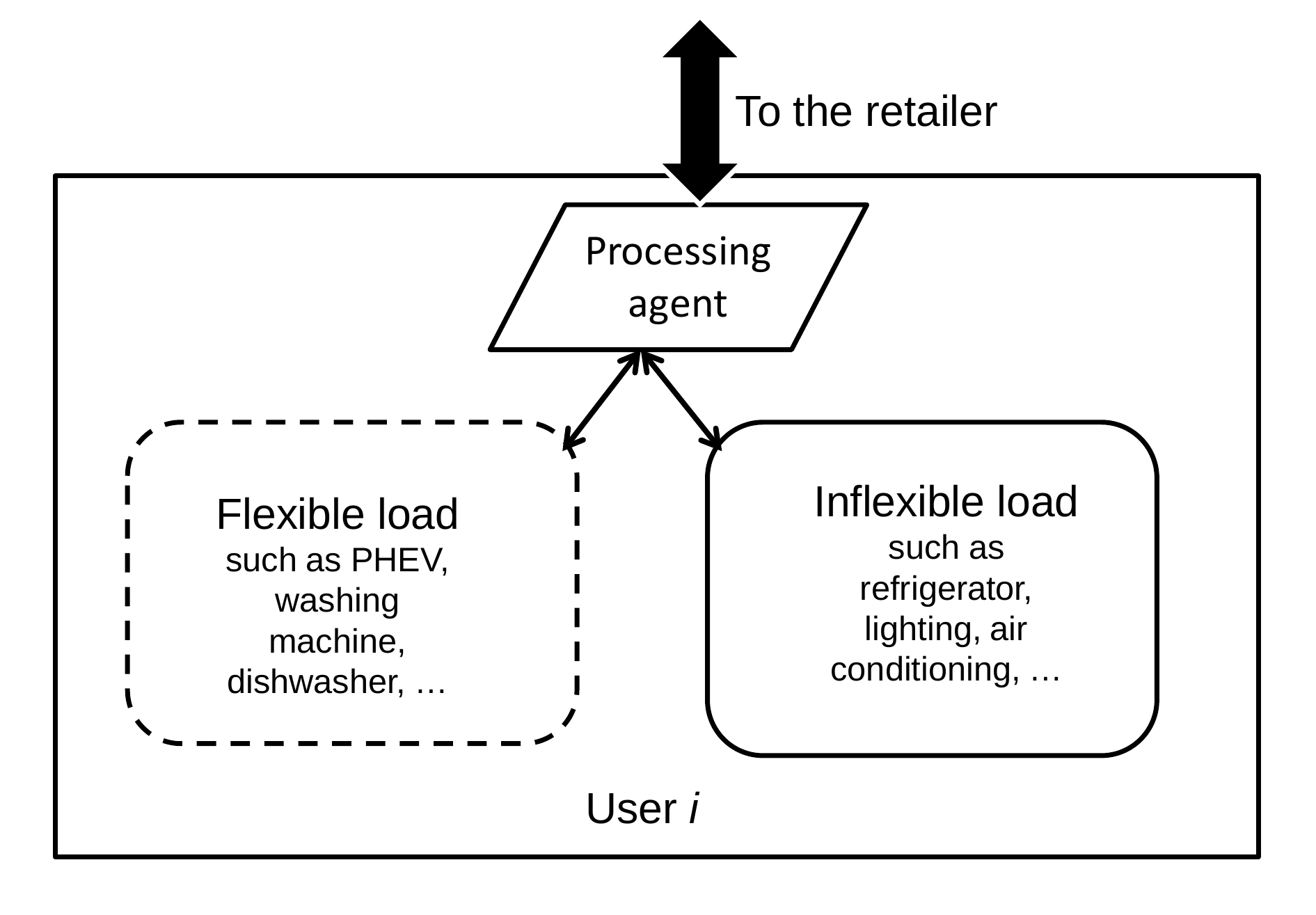} \vspace{-2em}
     \caption{Load segregation of a user according to power demand flexibility.}
     \label{f3}
 \end{figure} 
   
Fig. \ref{f4} displays the demand flexibility of a flexible appliance for different users. A certain job, ordered by user $j$, may take time $T_j$ to be completed. Moreover, the users set not only the desired job but also the deadline by which the job should be accomplished. In this case, we may recognize the following three random variables for a generic flexible appliance:

\begin{itemize}

\item \textbf{Start Time} shows the time when the user lets the power grid connect to the appliance, and can potentially start delivering energy.  

\item \textbf{Operating Time} indicates the time interval required for accomplishing a certain job, e.g., the ordered charging levels and modes (fast charging or slow charging) for PHEVs, which differs from one user to another. 

\item \textbf{End Time} represents the deadline specified by the user for accomplishing the task of the appliance.   

\end{itemize}

\par 

Hence, in general, we need to take into account this randomness when we investigate the overall behaviour of the system. Moreover, to design and analyse DR techniques more accurately, we should consider this stochasticity which comes from the patterns of people's living behaviours and appliance specifications.

\par 
  \begin{figure}
  	\centering
  	\includegraphics[width=\columnwidth]{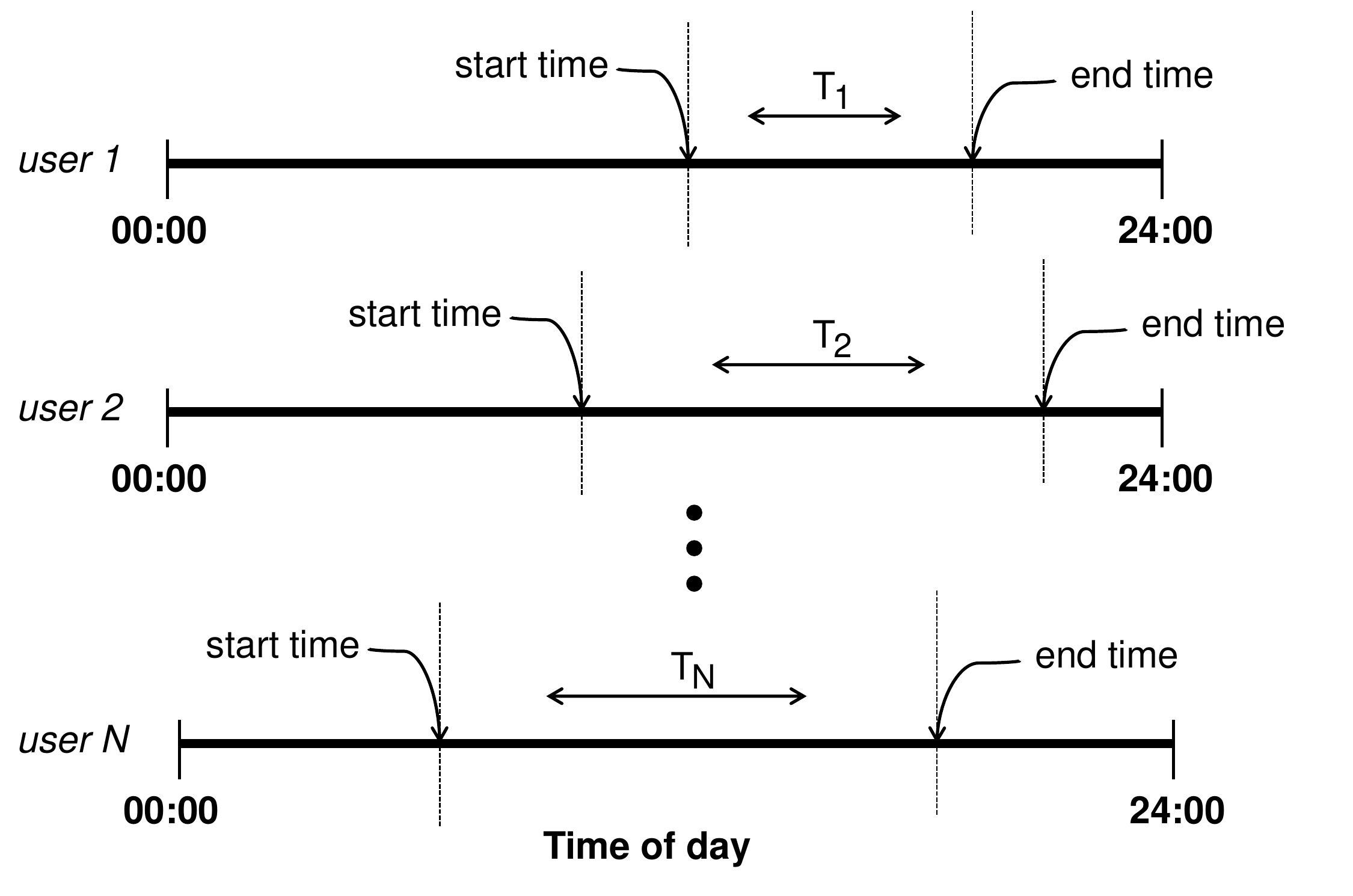}
  	\caption{Time setting for accomplishing a certain job on an appliance for different users during a day.}
  	\label{f4}
  \end{figure}

Therefore, in general, we can formulate the uncoordinated power consumption for an appliance operating a particular job as follows:  


	\begin{equation}
	x(t) \triangleq
	\begin{cases}
	a  &   t_0\leq t < t_0+T  \\
	0       & otherwise \\
	\end{cases}
	\label{xt}
	\end{equation}

\noindent{where we consider instantaneous power consumption as a random variable $a$ and assume that power consumption in standby mode is negligible. Additionally, $T$ and $t_0$ are the operation time and the job's start time, respectively. These parameters are random in general (see Fig. \ref{f6}).

In addition, here, we are mainly interested in knowing the daily power consumption profiles, i.e., the power consumption behaviour throughout a typical 24-hour day. Therefore, we calculate (\ref{xt}) in modulo 24-hours and then project the results onto a 24-hour day. In this case, some realizations of the stochastic process defined in (\ref{xt}) can be displayed as shown in Fig. \ref{f7}. This figure shows (\ref{xt}) for ten different users in a bar graph with one hour time granularity.

  \begin{figure}
      \centering
      \includegraphics[width=\columnwidth]{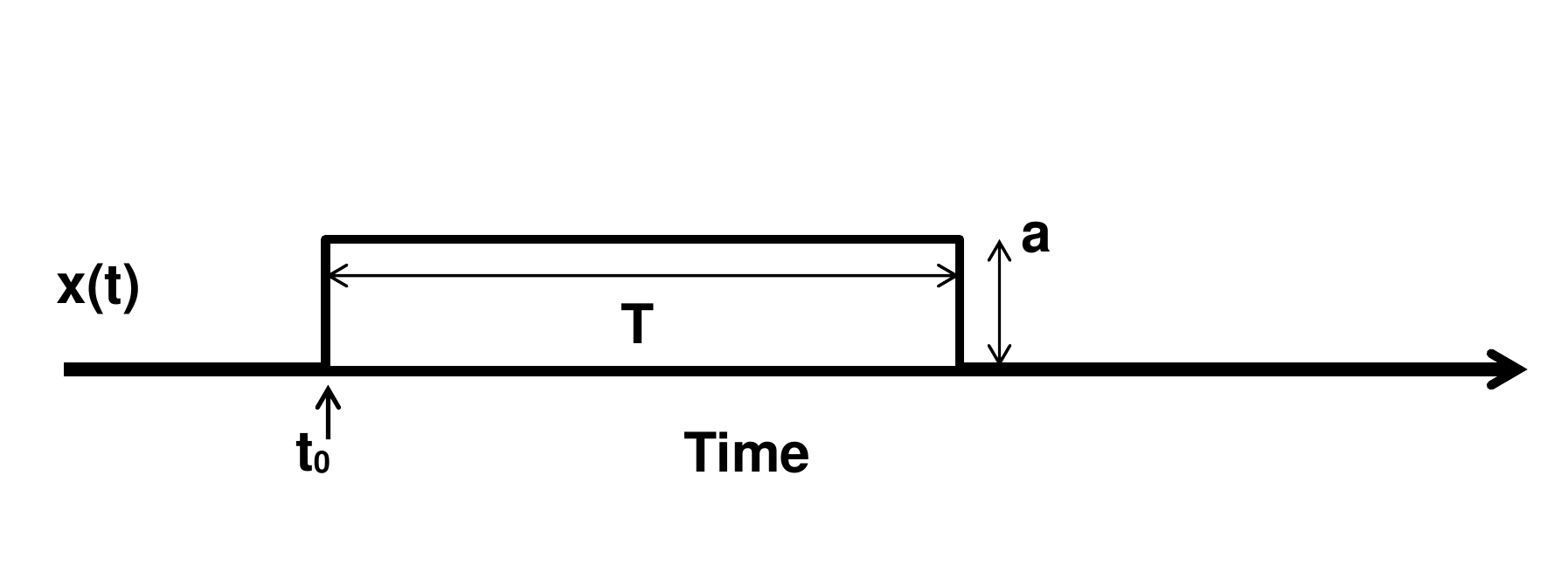}\vspace{-2em}
      \caption{Demonstration of a typical form of $x(t)$.}\vspace{+1em}
      \label{f6}
  \end{figure}

  \begin{figure}
  	\centering
  	\includegraphics[width=\columnwidth]{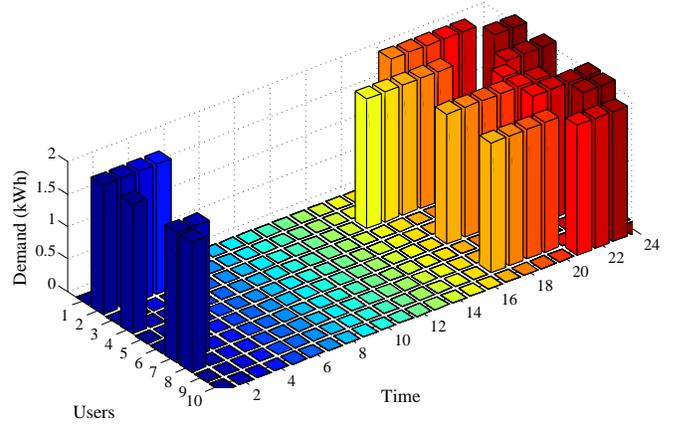}\vspace{-1em}
  	\caption{Some realizations of the stochastic process defined in (\ref{xt}) in modulo 24-hours.}
  	\label{f7}
  \end{figure}

Furthermore, a DR technique affects $x(t)$ and changes its statistics. This process can be modelled as if $x(t)$ is passed through a system as shown in Fig. \ref{f5}. Therefore, the information about the statistics of the input helps to design the system such that the resulting random process $y(t)$ fulfills the desired objectives of the DR techniques. The power consumption profile $y(t)$ results from both the DR algorithm and the particular statistics of the original power consumption profile.

\begin{figure}
	\centering
	\includegraphics[width=\columnwidth]{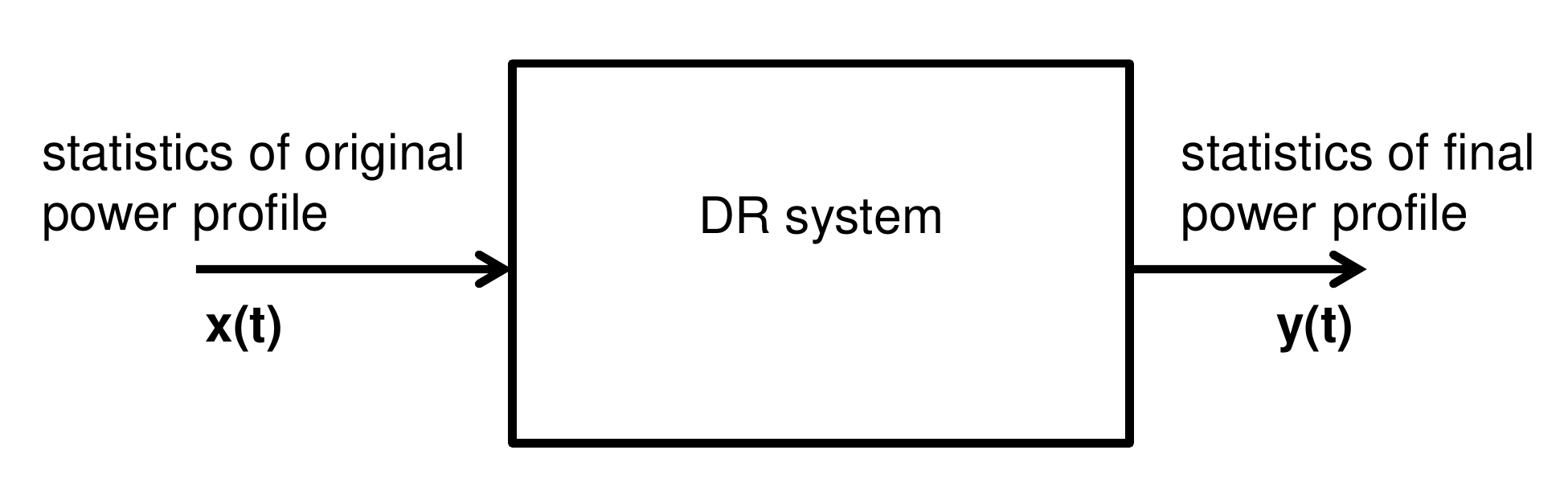}
	\caption{Demand response technique modelled as a system.}
	\label{f5}
\end{figure}

We can assume probability distribution functions (PDFs) for these two random variables, for instance, according to synthesized models obtained from experimental data, e.g., in \cite{lee_stochastic_2012} for PHEVs. Here, focusing on PHEVs, we assume $t_0$ and $T$ have independent PDFs that can be found from empirical data. For example, for $t_0$, as the arrival time, a Gaussian distribution is suggested in \cite{lee_stochastic_2012}:    

\begin{equation}
\begin{array}{lcr}
t_0 &\thicksim & N(\mu,{\sigma}^2)  
\end{array}
\label{dist}
\end{equation} 
 
\noindent{where $\mu$ and ${\sigma}^2$ denote the mean and variance of the Gaussian distribution, respectively. For PHEVs, there also exist different charging modes as described in Table \ref{tab1}. The charging mode may be considered related to the other random variable in (\ref{xt}) which is $a$. But, we note that there is a tight correlation between $a$ and $T$. This is obvious due to the fact that on fast charging modes the charging time $T$ is much shorter.}

\begin{table}[!t]
\renewcommand{\arraystretch}{1.3}
\caption{Different Types of Charging Outlets  (\lowercase{http://www.teslamotors.com/})}
\label{tab}
\centering
\begin{tabular}{|c||c| |c| |c|}
\hline
OUTLET & V/A &   kW & 
$
\begin{array}{c}
\text{MILES/1-HOUR}\\
 \text{OF CHARGING}
\end{array}
$
 \\ 
\hline
Standard & 110 / 12  & 1.4 kW & 3 \\
 \hline
Newer Standard &  110 / 15 & 1.8 kW &  4 \\
 \hline
Single Fast & 240 / 40	 & 10 kW & 29  \\
  \hline
 Twin Fast & 240 / 80	 & 20 kW & 58  \\
    \hline
\end{tabular}

\label{tab1}
\end{table}

\section{Statistical Analysis} \label{Sta}

In this section, using the aforementioned definition of $x(t)$, we calculate $\mathbb{E}[x(t)]$ which represents the expected value of power consumption for a certain appliance. This expectation can be expressed by the following proposition for PHEVs (refer to the appendix for the proof). 

\begin{proposition}
Given $f_{t_0}(\cdot)$ and $f_T(\cdot)$ as the PDFs of the independent random variables arrival time $t_0$ and charging time $T$ for a PHEV, the expected uncoordinated charging power demand can be expressed as:

\begin{gather}
 \mathbb{E}[x(t)]=a\times \big( F_{t_0}(t) \ast [\delta(t)-f_T(t)]  \big)
\label{EX}
\end{gather}
\noindent{in which, $\ast$ shows the convolution operation and $\delta(t)$ is the unit impulse function. Also, $F(\cdot)$ represents the cumulative distribution function (CDF).} 

\label{pro}
\end{proposition}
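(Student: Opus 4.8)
The plan is to read the definition (\ref{xt}) as $x(t)=a\,\mathbf{1}(t_0\le t<t_0+T)$, where $\mathbf{1}(\cdot)$ is the indicator function and, for this proposition, $a$ is the fixed charging rate of the mode under consideration. Since the expectation of an indicator is the probability of its event and $a$ is constant, I would immediately obtain $\mathbb{E}[x(t)]=a\,\Pr(t_0\le t \text{ and } t_0+T>t)=a\,\Pr(\{t_0\le t\}\cap\{T>t-t_0\})$. The proof then reduces to evaluating this joint probability and recognizing the answer as a convolution.

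To evaluate it I would condition on the arrival time and use the independence of $t_0$ and $T$: writing $\tau$ for a value of $t_0$, the joint probability equals $\int_{-\infty}^{t} f_{t_0}(\tau)\bigl(1-F_T(t-\tau)\bigr)\,d\tau$. Splitting the integrand, the ``$1$'' piece integrates to $F_{t_0}(t)$; in the ``$F_T$'' piece, because a charging time is nonnegative we have $F_T(t-\tau)=0$ for $\tau>t$, so the upper limit can be extended to $+\infty$, turning that term into $(f_{t_0}\ast F_T)(t)$. Thus $\mathbb{E}[x(t)]=a\bigl(F_{t_0}(t)-(f_{t_0}\ast F_T)(t)\bigr)$.

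The final step is to put this into the symmetric form of (\ref{EX}). Since the densities exist, $F_{t_0}'=f_{t_0}$ and $F_T'=f_T$, and the convolution--differentiation identity $\frac{d}{dt}(g\ast h)=g'\ast h=g\ast h'$ with $g=F_{t_0}$, $h=F_T$ gives $f_{t_0}\ast F_T=F_{t_0}\ast f_T$. Using the sifting property $F_{t_0}\ast\delta=F_{t_0}$ as well, the parenthesis becomes $F_{t_0}(t)-(F_{t_0}\ast f_T)(t)=\bigl(F_{t_0}\ast[\delta-f_T]\bigr)(t)$, which is exactly (\ref{EX}).

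I do not anticipate a genuine difficulty; the care is all at the two boundaries. The events $\{t_0\le t<t_0+T\}$ and $\{t_0\le t\}\cap\{t_0+T>t\}$ differ only on the set $\{t_0=t\}$, which has probability zero because $t_0$ has a density, so strict versus non-strict inequalities may be swapped freely. The extension of the integration limit uses $T\ge 0$ almost surely, and the convolution--differentiation step uses the mild regularity (absolute continuity and integrability of the CDFs) that the existence of $f_{t_0}$ and $f_T$ already provides; I would state both explicitly. Finally I would note that when $a$ is itself random and correlated with $T$ as discussed after (\ref{dist}), the same computation yields (\ref{EX}) with $a$ replaced inside the convolution by the conditional mean of the charging rate, the stated formula being the deterministic-rate special case.
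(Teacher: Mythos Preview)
Your argument is correct and follows the same underlying idea as the paper: rewrite $x(t)$ as $a$ times an indicator, take expectation to get a probability, condition on one of the two independent variables, and recognize the result as a convolution with $\delta-f_T$. The only real difference is which variable you condition on. The paper conditions on $T$: writing $\mathbb{E}[x(t)]=a\,P(t-T\le t_0\le t)$ and integrating over $T'\in[0,\infty)$ gives directly $a\bigl[F_{t_0}(t)-\int_0^\infty F_{t_0}(t-T')f_T(T')\,dT'\bigr]$, which is already $a\bigl(F_{t_0}\ast[\delta-f_T]\bigr)(t)$ with no further manipulation. You instead condition on $t_0$, which first produces $a\bigl(F_{t_0}(t)-(f_{t_0}\ast F_T)(t)\bigr)$ and then forces the extra step $f_{t_0}\ast F_T=F_{t_0}\ast f_T$ to reach (\ref{EX}). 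That identity is fine here (it is just integration by parts, with the boundary term $F_{t_0}(\tau)F_T(t-\tau)$ vanishing at $\pm\infty$), but note that your stated justification via ``integrability of the CDFs'' is not quite right, since neither CDF is integrable over $\mathbb{R}$; the integration-by-parts route is the clean way to say it. Conditioning on $T$ as the paper does simply sidesteps this detour.
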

 
We can calculate (\ref{EX}) for any given distribution analytically or numerically. Hereafter, we adopt different distributions for the PHEV's charging time $T$ following some available empirical research data in the literature, as shown in Fig. \ref{besuni}, to study the corresponding results of (\ref{EX}). We investigate four cases for the distribution of $T$, namely, the uniform, exponential, Gaussian with positive support, and Rician distributions. These distributions have different degrees of freedom (DoF) and all of them support $T$ over $[0,+\infty)$:
\begin{itemize}
\item{\textbf{T: Uniform} } In this case, we consider $T$ to have uniform distribution over the interval $[c,d)$. Then, $\mathbb{E}[x(t)]$ can be analytically derived as stated in the following proposition (see the appendix for the proof). 
\end{itemize}

\begin{proposition} Assuming $t_0$ has a normal distribution with mean $\mu$ and variance $\sigma^2$ and $T$ has a uniform distribution over the interval $[c,d)$, $0\leq c < d$, the expected uncoordinated charging power demand becomes:  
\begin{gather}
\nonumber \mathbb{E}[x(t)]=a\times 
\bigg[ 1-\mathbf{Q}(\frac{t-\mu}{\sigma})+\frac{\sigma}{d-c}
( c'\mathbf{Q}(c')\\
-d'\mathbf{Q}(d')+f(d')-f(c')+d'-c' ) \bigg]
\end{gather}
\label{prou}

\noindent{where $c'=\frac{t-c-\mu}{\sigma}$, $d'=\frac{t-d-\mu}{\sigma}$. Also, $\mathbf{Q}(x)$ and $f(x)$ are defined as follows: }

\[ \mathbf{Q}(x)=\frac{1}{\sqrt{2\pi}}\int\limits_x^\infty \exp (-\frac{u^2}{2}) du,\]

\[ f(x)=\frac{\exp(-\frac{x^2}{2})}{\sqrt{2\pi}}.\]
\end{proposition}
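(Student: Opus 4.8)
The plan is to specialize the general identity (\ref{EX}) of Proposition~\ref{pro} to the two prescribed distributions and evaluate the resulting convolution in closed form. First I would peel off the impulse in (\ref{EX}), writing
\[
\mathbb{E}[x(t)] = a\big( F_{t_0}(t) - (F_{t_0}\ast f_T)(t)\big),
\]
and substitute the Gaussian CDF $F_{t_0}(t) = 1-\mathbf{Q}\!\big(\tfrac{t-\mu}{\sigma}\big)$. This already produces the leading terms $1-\mathbf{Q}\!\big(\tfrac{t-\mu}{\sigma}\big)$ of the claimed expression, so the real work is to reduce the convolution term $(F_{t_0}\ast f_T)(t)$ to the bracketed $\mathbf{Q}$- and $f$-combination.

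Next I would compute that convolution. Since $f_T$ is the uniform density, equal to $\tfrac{1}{d-c}$ on $[c,d)$ and zero elsewhere, $(F_{t_0}\ast f_T)(t)=\tfrac{1}{d-c}\int_c^d F_{t_0}(t-\tau)\,d\tau$, and the substitution $u=(t-\tau-\mu)/\sigma$ turns this into $\tfrac{\sigma}{d-c}\int_{d'}^{c'}\big(1-\mathbf{Q}(u)\big)\,du$ with $c'=\tfrac{t-c-\mu}{\sigma}$ and $d'=\tfrac{t-d-\mu}{\sigma}$ as defined in the statement (note $c'>d'$ since $c<d$). The flat part integrates to $\tfrac{\sigma}{d-c}(c'-d')$, which is where the standalone $d'-c'$ term in the displayed formula originates once its sign is flipped by the outer subtraction. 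For the remaining part the key step is the antiderivative $\int \mathbf{Q}(u)\,du = u\,\mathbf{Q}(u)-f(u)$, which I would verify directly from $\mathbf{Q}'(u)=-f(u)$ and $f'(u)=-u f(u)$; evaluating it between $d'$ and $c'$ yields $c'\mathbf{Q}(c')-d'\mathbf{Q}(d')-f(c')+f(d')$.

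Finally I would reassemble everything: $(F_{t_0}\ast f_T)(t)=\tfrac{\sigma}{d-c}(c'-d')-\tfrac{\sigma}{d-c}\big(c'\mathbf{Q}(c')-d'\mathbf{Q}(d')-f(c')+f(d')\big)$, and subtracting this from $F_{t_0}(t)$, multiplying by $a$, and gathering all $\tfrac{\sigma}{d-c}$-terms produces the stated formula verbatim — the $1$ coming from $F_{t_0}$ and the $d'-c'$ from the flat part of the uniform convolution (these in fact cancel, but the statement keeps them to display the structure). I do not anticipate a genuine obstacle: the only points needing care are the orientation of the limits after the change of variable and the signs when folding the constant terms into the bracket. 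The hypothesis $0\le c$ is used only to keep $f_T$ supported on $[0,\infty)$, consistent with the model (\ref{xt}); it does not otherwise enter the algebra.
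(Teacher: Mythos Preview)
Your proposal is correct and follows essentially the same route as the paper: start from the convolution form of Proposition~\ref{pro}, reduce the convolution with the uniform density to $\tfrac{1}{d-c}\int_c^d F_{t_0}(t-\tau)\,d\tau$, standardize via the substitution $u=(t-\tau-\mu)/\sigma$, and evaluate using an antiderivative of the standard normal CDF/$\mathbf{Q}$-function. The only cosmetic differences are that the paper performs the standardization in two successive substitutions and uses $\int F(x)\,dx=xF(x)+f(x)+C$ rather than your equivalent $\int \mathbf{Q}(u)\,du=u\,\mathbf{Q}(u)-f(u)+C$; your observation that the $1$ and the $\tfrac{\sigma}{d-c}(d'-c')$ terms cancel (since $\tfrac{\sigma}{d-c}(c'-d')=1$) is a nice extra remark not made explicit in the paper.
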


\begin{figure}
      \centering
      \includegraphics[width=\columnwidth]{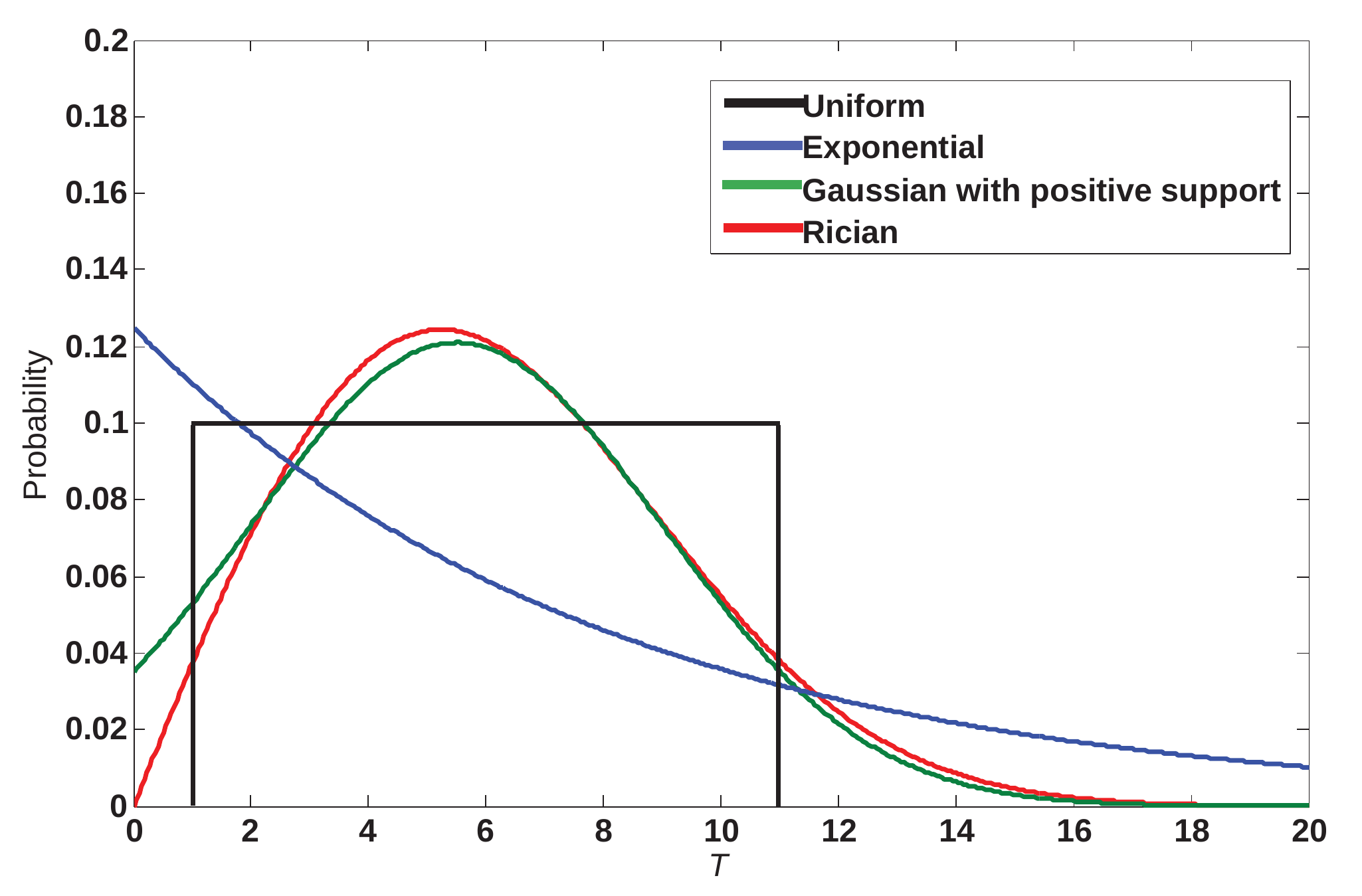}\vspace{-1em}
      \caption{Uniform, exponential, Gaussian with positive support and Rician distributions for $T$.}
      \label{besuni}
\end{figure}

\begin{itemize}

\item{\textbf{T: Exponential}} The driven distance and hence the charging time of an EV can be modelled by an exponential distribution \cite{liang_stochastic_2014}. For an exponentially distributed $T$ with mean $\lambda^{-1}$, we have the following PDF: 

\begin{gather}
f_T(T)=\lambda\exp(-\lambda T).
\label{pexp}
\end{gather}

\item{\textbf{T: Gaussian}} When $T$ has a Gaussian PDF with positive support as shown in Fig. \ref{besuni}, $T$ has the following distribution function: 

\begin{gather}
 f_T(T)=N(T;\mu,\sigma^2|0\leq T < \infty),\\
=\frac{1}{\mathbf{Q}(\frac{-\mu}{\sigma})\sqrt{2\pi \sigma^2}}\exp (-\frac{(T-\mu)^2}{2\sigma^2}),\, \,\,\, 0\leq T < \infty. 
\label{pgauss}
\end{gather}

%
%
%

\item{\textbf{T: Rician}} Finally, we consider a Rician PDF for $T$ having the following form:

\begin{equation}
f(T|\nu , \sigma)=\frac{T}{\sigma^2}\exp(-\frac{(T^2+\sigma^2)}{2\sigma^2})I_0(\frac{T\nu}{\sigma^2})
\label{rice}
\end{equation}

\noindent{where $\nu \geq 0$ and ${\sigma} \geq 0$  present the noncentrality parameter and scale parameter, respectively. $I_0(\cdot)$ is the modified Bessel function of the first kind with order zero.}

\end{itemize}

\par

\begin{figure}
	\centering
		\includegraphics[width=\columnwidth]{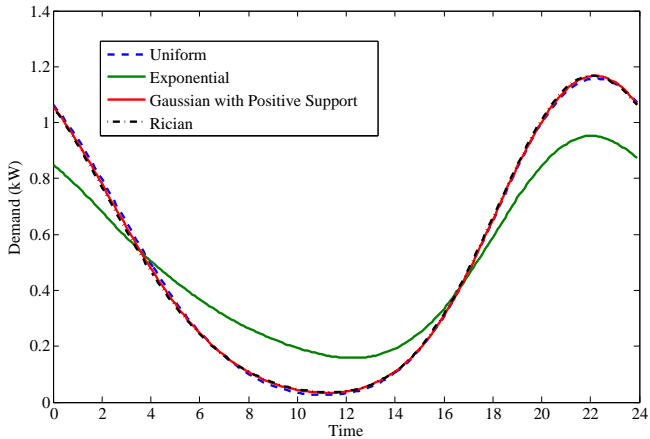}\vspace{-1em}
	\caption[]{A PHEV's expected daily power demand profile for different distributions of charging time $T$.}
	\label{com}
\end{figure}

\vspace{+1em}
\section{Simulation Results} \label{SR}
\vspace{+1em}
In this section, we consider Gaussian distribution for the random variable $t_0$ as the arrival time with $\mu=19$ and $\sigma^2 = 10$ inspired from \cite{lee_stochastic_2012}. Furthermore, we consider four cases for the distribution of the random variable $T$ as described in section \ref{Sta}. First, we consider $T$ to have a uniform distribution over the interval $[1,11]$. Thus, it will have $\mu=6$ and $\sigma^2=8.33$. Second, we assume $T$ to be exponentially distributed with mean $\mu=6$. Third, we assume $T$ to be Gaussian distributed with positive support as presented in (\ref{pgauss}). In this case, we use the well-known \textit{accept-reject} approach to generate the random values. 
Finally, we consider a Rician distribution for $T$. In all cases (except for the exponential distribution), we set the parameters of the distributions such that they all have the same mean and variance. However, for the exponential distribution case, we can only set either its mean or variance to be the same as that of the others since this distribution has just one DoF. 
Based on an average 0.25 kWh energy consumption for each mile of driving, we set all the parameters in (\ref{xt}). 
In addition, we assume a system comprising of $N=100,000$ PHEV users in our simulations in order to obtain smooth curves representing the probabilistic expectation. \par

The results for the expected daily power demand of a typical PHEV under the aforementioned settings are illustrated in Fig. \ref{com}. As can be observed, the expected daily power demand resulting from the charging time distributions which possess the same mean and variance tends to the same power profile. However, for the exponential distribution, since it has only one DoF, we see that its expected power demand differs significantly from that of the others. \par 

Based on our proposed model and the obtained results, we observe that the expected uncoordinated charging power demand for a typical PHEV is much larger during 6 p.m. to 1 a.m. compared to that during 8 a.m. to 2 p.m. in a one-day frame. \par

Next, in Fig. \ref{uniex}, we compare our obtained analytical result for the expected power demand according to a uniform distribution of $T$ in proposition (\ref{prou}) with the simulation results in a one-day frame. As can be seen in this figure, the simulation results follow proposition (\ref{prou}) closely, affirming the acquired formulation.  \par

  \begin{figure}
  	\centering
  	\includegraphics[width=\columnwidth]{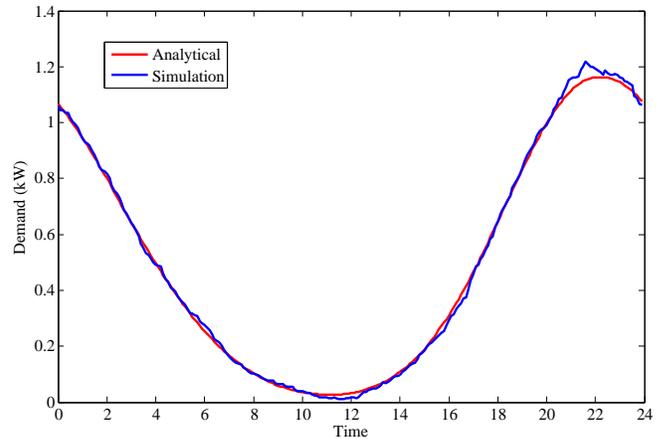}\vspace{-1em}
  	\caption{A PHEV's expected daily power demand profile for a uniform distribution of charging time $T$.}
  	\label{uniex}
  \end{figure}
\vspace{2em}
\section{Conclusion and Future Work} \label{Con}

In this paper, we discussed the inherent randomness in the demand for flexible appliances in general and for PHEVs in particular. We considered random distributions for the arrival time and the charging time of PHEVs inspired by available empirical data in the literature. Accordingly, we presented the uncoordinated charging power demand impact of a PHEV as a stochastic process based on these random variables. Next, we derived the expected daily power consumption profile according to this random process. Our simulation results show that the EV's expected daily power demand profiles obtained under the uniform, Gaussian with positive support and Rician distributions for charging time are identical when the first and second order statistics of these distributions are the same. Our obtained results introduce a simple description for the expected power demand of a typical PHEV and hence give us insights into the effect of adding each PHEV into the power system.  

\par 

The study presented in this paper can be extended and developed in various ways. For example, the convergence of the daily power demand for the aforementioned distributions needs to be proven. In addition, the results from this modelling can be incorporated into designing DR algorithms and evaluating the available DR techniques more accurately.

\appendix

%

\begin{itemize}

\item[A.]
\begin{proof}[Proof of proposition \ref{pro}]
Since $x(t)=0$ for $t_0\leq t-T$ and $t\leq t_0$. Then, $\mathbb{E}[x(t)]$ becomes:

\begin{gather}
\nonumber \mathbb{E}[x(t)]=a\times P(t_0 \leq t \leq t_0+T)\\ 
=a\times P(t-T \leq t_0 \leq t).  
\end{gather}
Further, we can use the \textit{total probability theorem} \cite{kobayashi2012probability} to get

\begin{gather}
\nonumber \mathbb{E}[x(t)]= a\times\int\limits_0^\infty P(t-T \leq t_0 \leq t | T=T') f_T(T')dT'\\
 = a\times \int\limits_0^\infty (F_{t_0}(t) - F_{t_0}(t-T')) f_T(T')dT' \\ 
 =a\times \left[F_{t_0}(t)-\int\limits_0^\infty F_{t_0}(t-T')f_T(T')dT'\right]
\label{int}
\end{gather}
 
for which we have taken into account the facts that $\int\limits_0^\infty f_T(T')dT'=1$, and $t_0$ and $T$ are independent. Furthermore, we can express (\ref{int}) in a more concise form by using the definition of the convolution integral and the identity $f(t) \ast \delta(t)=f(t)$ as follows:   

\begin{gather}
 \mathbb{E}[x(t)]=a\times \left( F_{t_0}(t) \ast [\delta(t)-f_T(t)]  \right).
\end{gather}

\end{proof}

\item[B.]
\begin{proof}[Proof of proposition \ref{prou}]
Since $T$ is uniformly distributed over the interval $[c,d)$, $0\leq c < d$, we can write (\ref{int}) as follows:

\begin{gather}
\mathbb{E}[x(t)]=a\hspace{-0.8mm}\times \hspace{-1.2mm} \left[\hspace{-0.5mm}F_{t_0}(t)-\frac{1}{d-c}\int\limits_c^d \hspace{-1.5mm} F_{t_0}(t-T')dT'\hspace{-0.2mm}  \right].
\label{intuni}
\end{gather}
Then, by changing the integration variable from $T'$ to $\alpha=t-T'$, it can be rewritten as follows: 

\begin{gather}
\mathbb{E}[x(t)]=a\times \left[ F_{t_0}(t)+\frac{1}{d-c}\int\limits_{t-c}^{t-d} F_{t_0}(\alpha)d\alpha \right].
\label{intuni2}
\end{gather}
Further, we need to replace $\alpha$ with $\beta=\frac{\alpha-\mu}{\sigma}$ to have 

\begin{gather}
\mathbb{E}[x(t)]=a\times \left[ F_{t_0}(t)+\frac{\sigma}{d-c}\int\limits_{\frac{t-c-\mu}{\sigma}}^{\frac{t-d-\mu}{\sigma}} F_{t_0}(\beta)d\beta \right]
\label{intuni3}
\end{gather}

in order to be able to use the following formula for a standard normal random variable with CDF $F(\cdot)$ and PDF $f(\cdot)$ to calculate the last term in (\ref{intuni3}): 

\begin{gather}
\int F(x)dx=xF(x)+f(x)+c.
\label{intnorm}
\end{gather}

Also, we now set $c'=\frac{t-c-\mu}{\sigma}$ and $d'=\frac{t-d-\mu}{\sigma}$ for simplicity to express (\ref{intuni3}) in the following form:

\begin{align}
\mathbb{E}[x(t)]=  a\times \bigg[  1-\mathbf{Q}(\frac{t-\mu}{\sigma})+\frac{\sigma}{d-c}(c'\mathbf{Q}(c')  \nonumber \\
  -d'\mathbf{Q}(d')+f(d')-f(c')+d'-c') \bigg]
\label{intuni4}
\end{align}
in which we used the equation $F(x)=1-\mathbf{Q}(x)$. 
\end{proof}
\end{itemize}

\bibliographystyle{IEEEtran} 
\bibliography{IEEEabrv,myBIB}

\end{document}